\newtheorem{theorem}{Theorem}
\newtheorem{dfn}{Definition}
\begin{document}
%
\title{On secure network coding  with uniform  wiretap sets}
%
%
%
\author{Wentao~Huang,
        Tracey~Ho,~\IEEEmembership{Senior~Member,~IEEE,}
        Michael~Langberg,~\IEEEmembership{Member,~IEEE,}
        and~Joerg~Kliewer,~\IEEEmembership{Senior~Member,~IEEE,}
\thanks{Wentao~Huang and Tracey~Ho are with the Department of Electrical Engineering, California Institute of Technology, Pasadena, CA
91125, USA. Email:\{whuang,tho\}@caltech.edu}
\thanks{Michael~Langberg is with the Department of Mathematics and Computer Science at The Open University of Israel, 108 Ravutski St., Raanana
43107, Israel. Email:mikel@openu.ac.il}
\thanks{Joerg Kliewer is with the Klipsch School of Electrical and Computer Engineering, New Mexico State University, Las Cruces,
NM 88003, USA. Email: jkliewer@nmsu.edu}}

\maketitle

\begin{abstract}
This paper shows determining the secrecy capacity of a unicast network with uniform wiretap sets is at least as difficult as the $k$-unicast problem. In particular, we show that a general $k$-unicast problem can be reduced to the problem of finding the secrecy capacity of a corresponding single unicast  network with uniform link capacities and one arbitrary wiretap link.
\end{abstract}


%
\IEEEpeerreviewmaketitle

\section{Introduction}The secure network coding problem, introduced by  Cai and Yeung \cite{Cai2002}, concerns information theoretically secure communication over a network where an unknown subset of network links may be wiretapped. A secure code ensures that the wiretapper
obtains no information about the secure message being communicated.
The secrecy capacity of a network, with respect to a given collection of possible wiretap sets,
 is the maximum rate of communication such that for any one of the wiretap sets the secrecy constraints  are satisfied. Types of secrecy constraints studied in the literature include perfect secrecy, strong secrecy and weak secrecy.

This paper considers the problem of finding the secrecy capacity of a network when we allow network nodes in addition to the source to generate independent randomness. We show that a general $k$-unicast problem can be
reduced to a corresponding single unicast secrecy capacity problem with uniform link capacities where any single link can be wiretapped. This implies that determining the secrecy capacity, even in the simple case of a single unicast, uniform link capacities and unrestricted wiretap sets where any single
link can be wiretapped, is at least as
difficult as the long standing open problem of determining the capacity region of multiple-unicast network coding, which is not
presently known to be in P, NP or undecidable~\cite{Langberg2009}. In contrast, under the assumption that only the source can generate randomness, link capacities are uniform and up to $z$ arbitrary links can be wiretapped, Cai and Yeung \cite{Cai2002} showed
that the secrecy capacity is given by the cut-set bound and linear codes suffice to achieve capacity. The secure network coding problem with restricted wiretap sets and/or non-uniform link capacities has been considered by Cui et al.~\cite{cui12}, who studied achievable coding schemes, and by Chan and Grant~\cite{chan_grant_08}, who showed that determining multicast secrecy capacity with restricted (non-uniform) wiretap sets is at least as difficult as determining capacity for multiple-unicast network coding.
Our reduction follows the same core ideas appearing in \cite{chan_grant_08} with two
differences. First, by introducing the idea of key cancellation and
replacement at intermediate nodes, our construction does not need to impose
restrictions on which links can be wiretapped. Secondly, unlike the
reduction in [4] which involves multiple terminals, ours only needs a single
destination. Therefore, our construction shows that even a single secure
unicast in the uniform setting (equal capacity links where any one of which
can be wiretapped) is as difficult as a k-unicast problem.

\section{Model}
A network is represented by a directed graph $\mathcal{G}=(\mathcal{V},\mathcal{E})$, where $\mathcal{V}$ is the set of vertices which represent nodes, and $\mathcal{E}$ 
is the set of edges that represent links.  We assume links have equal capacity and there may be multiple links between a pair of nodes. There is a source node $S \in \mathcal{V}$ and a destination node $D \in \mathcal{V}$.  The source wants to communicate a message $M$, uniformly drawn from a finite alphabet set $\mathcal{S}_n$, to the destination using a code with length or duration $n$. Then the rate of the code is $n^{-1}\log |\mathcal{S}_n|$. For the network coding problem, we say that a communication rate $R$ is feasible if there exists a sequence of length-$n$ codes such that $|\mathcal{S}_n|=2^{nR}$ and the probability of decoding error tends to 0 as $n \to \infty$.

For the secure network coding problem,  we specify additionally a collection $\mathcal{A}$ of wiretap link sets, \emph{i.e.}, $\mathcal{A}$ is a collection of subsets of $\mathcal{E}$ such that an eavesdropper can wiretap any one set in $\mathcal{A}$.  We consider three kinds of secrecy constraints:  the requirement, for all $A \in \mathcal{A}$, that
\begin{align}
  I(M;X^n(A)) =  0\label{perfect}\end{align}corresponds to perfect secrecy, that
  \begin{align}
  I(M;X^n(A)) \to 0 \text{ as } n \to \infty\label{strong}\end{align}corresponds to strong secrecy, and that
  \begin{align}
  \frac{I(M;X^n(A))}{n}  \to 0 \text{ as } n \to \infty\label{weak}
\end{align}corresponds to weak secrecy,
where $X(A) = \{X(a,b): (a,b) \in A \}$, and $X(a,b)$ is the signal transmitted on the link $(a,b)$.
We say a secrecy rate $R$ is feasible if the communication rate $R$ is feasible and the prescribed secrecy condition is satisfied. The secrecy capacity of the network is defined as the supremum of all feasible secrecy rates.
\section{Main Result}
\begin{figure}[h!]
  \begin{center}
      \includegraphics[width=0.7\textwidth]{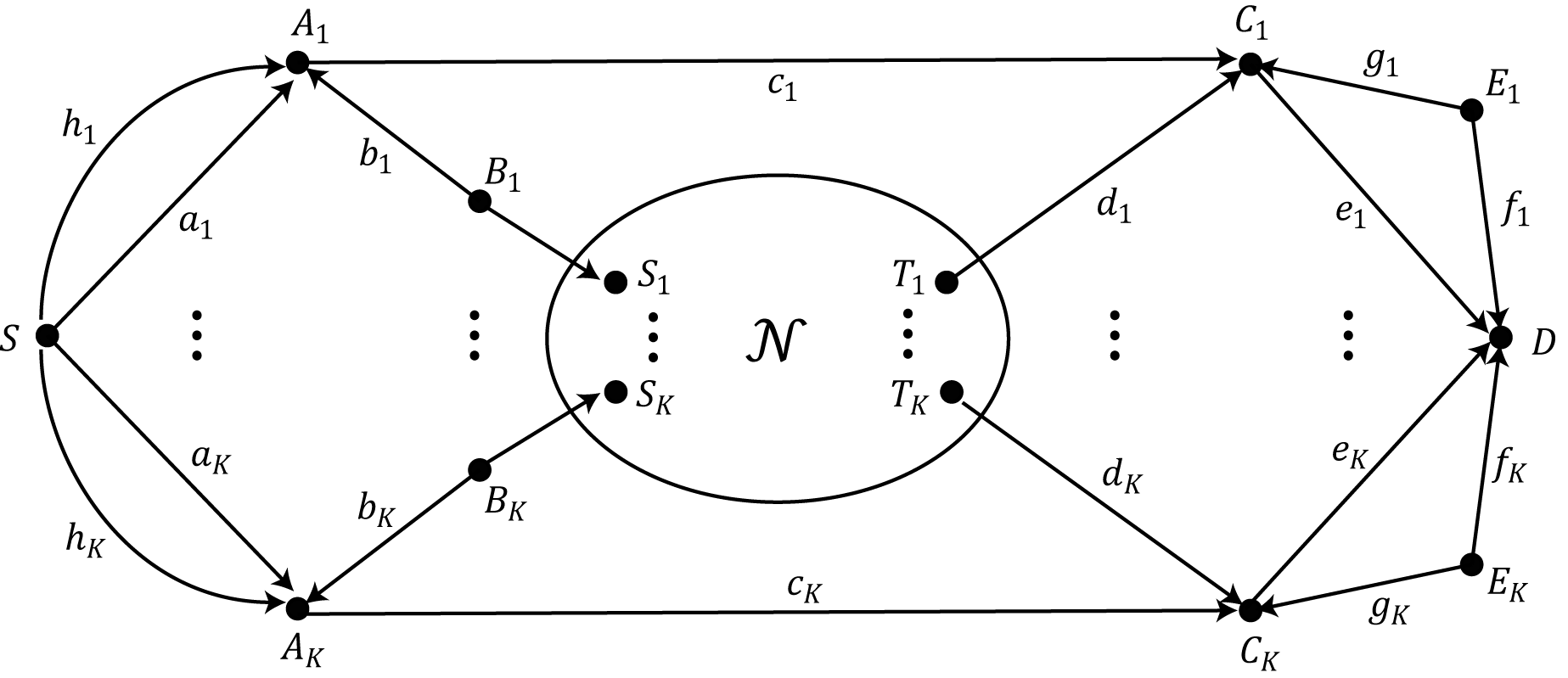}
  \caption{The source $S$ wants to communicate with the destination $D$ secretly (with either weak secrecy, strong secrecy or perfect secrecy). $\mathcal{N}$ is an embedded general network. Links are labeled by the signals transmitted on them.}\label{secmul}
           \end{center}
\end{figure}
\begin{theorem}\label{th1}Given any $K$-unicast problem with source-destination pairs $\{(S_i,T_i),\ i=1,...,K\}$ of unit rate, the corresponding secure communication problem in Figure \ref{secmul} with unit capacity links, any one of which can be wiretapped, has secrecy capacity $K$ (under perfect, strong or weak secrecy requirements) if and only if the  $K$-unicast problem is feasible.
\end{theorem}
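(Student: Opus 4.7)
The plan is to prove the two directions separately, leveraging the fact that perfect secrecy implies strong secrecy which in turn implies weak secrecy. Thus it suffices to exhibit a perfectly secure code of rate $K$ in the forward direction and to use only the weak-secrecy hypothesis in the converse; the same proof then covers all three notions.

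For the forward (``if'') direction, I would construct a secure code on top of a feasible $K$-unicast code for $\mathcal{N}$, following the layout in Figure~\ref{secmul}. The source $S$ generates $K$ independent uniform keys $R_1,\dots,R_K$ over a suitable finite field and transmits $M_i+R_i$ on its $i$-th outgoing link. The same keys are fed into $\mathcal{N}$ at the nodes $S_1,\dots,S_K$ and routed by the $K$-unicast code so that $T_i$ recovers $R_i$, which is then used at $D$ to decrypt $M_i$ from $M_i+R_i$. To ensure that every internal link of $\mathcal{N}$ carries a signal independent of $M$, intermediate nodes implement the ``key cancellation and replacement'' idea: each outgoing edge is masked by a freshly drawn local key that is later subtracted off at a downstream node, so that the marginal distribution on every single link is uniform and independent of $M$. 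This yields $I(M;X^n(a,b))=0$ on every link, i.e., perfect secrecy, at the target rate $K$.

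For the converse (``only if'') direction, I would first invoke the cut-set bound to show that feasible rates cannot exceed $K$, the size of the outgoing cut at $S$, regardless of the secrecy requirement. I would then start from a sequence of rate-$K$ codes satisfying only weak secrecy at every link and extract a feasible $K$-unicast code for $\mathcal{N}$. Since the outgoing capacity of $S$ equals $K$ and rate $K$ is achieved, Fano's inequality forces each of the $K$ source-side links to carry essentially one coordinate of $M$; the per-link weak-secrecy hypothesis then forces that signal to be asymptotically independent of $M$, so it must behave like $M_i+R_i$ for some asymptotically independent key $R_i$. Decodability at $D$ together with weak secrecy on $D$'s incoming links forces each $R_i$ to be recoverable at $T_i$ from signals flowing through $\mathcal{N}$, while weak secrecy on every internal link of $\mathcal{N}$ forces the $K$ key sessions to be asymptotically mutually independent. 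These two facts together yield $K$ independent unit-rate unicast sessions from $S_i$ to $T_i$ in $\mathcal{N}$.

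The main obstacle I anticipate lies in the converse: turning the informal statement ``each source link carries a masked coordinate of $M$'' into rigorous asymptotic independence claims under only weak secrecy, and then genuinely decoupling the $K$ key sessions inside $\mathcal{N}$ into independent unicasts rather than a more general mixing scheme. This will require careful entropy manipulations that exploit both the tightness of the min-cut bound and the specific wiring of Figure~\ref{secmul}, in particular that each $S_i$ has a unique predecessor on the source side and each $T_i$ a unique successor toward $D$; together with the per-link weak-secrecy constraints, these should pin down the $K$-unicast structure on $\mathcal{N}$.
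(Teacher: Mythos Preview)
Your forward direction is essentially the paper's scheme: a one-time-pad on each of the $K$ source links, keys routed through $\mathcal{N}$ by the assumed $K$-unicast code, and fresh local keys at intermediate nodes so that every single link is uniform and independent of $M$. This gives perfect secrecy at rate $K$, and the cut-set bound makes $K$ the capacity.

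Your converse plan, however, has a genuine gap. You intend to argue that each source-side link ``must behave like $M_i+R_i$'' and then to decouple the $K$ key sessions by invoking secrecy on the internal links of $\mathcal{N}$. Neither step is justified, and neither is needed. An arbitrary rate-$K$ weakly secure code may mix the message coordinates and all the local randomness in an arbitrary (nonlinear) way, so there is no per-link structural identification of the form $M_i+R_i$ to be had; and the paper's argument never uses the secrecy constraint on any link inside $\mathcal{N}$, so your proposed decoupling mechanism is not the operative one.

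The paper's converse is purely information-theoretic and bypasses both issues. Fix index $1$. Using the cut $\{\bm{c}_1,\bm{d}_1,\bm{f}_2,\dots,\bm{f}_K\}$, Fano's inequality, the independence of $\{\bm{d}_1,\bm{f}_2,\dots,\bm{f}_K\}$ from $M$, and the weak-secrecy constraint on the \emph{single} link $\bm{c}_1$, a chain of entropy expansions yields $H(\bm{c}_1^n\mid M,\bm{d}_1^n)=o(n)$, then $H(\bm{b}_1^n\mid M,\bm{c}_1^n)=o(n)$, hence $H(\bm{b}_1^n\mid \bm{d}_1^n)=o(n)$, while separately $H(\bm{b}_1^n)\ge n(1-o(1))$. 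Thus $I(\bm{b}_1^n;\bm{d}_1^n)\ge n(1-o(1))$, and by symmetry the same holds for every $i$ under the \emph{same} network code on $\mathcal{N}$. An outer channel code over length-$n$ super-symbols then achieves rate $1$ from each $S_i$ to $T_i$ simultaneously, giving the $K$-unicast. The key idea you are missing is this direct mutual-information route via the specific cut and the secrecy of $\bm{c}_i$ alone; it replaces both the structural claim and the session-independence argument you correctly flagged as problematic.
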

\begin{proof}
  ``$\Rightarrow$''. Suppose a secrecy rate of $K$ is achieved by a code with length $n$. Let $M$ be the source input message, then $H(M)=Kn$. Because there is no shared randomness between different nodes, $M$ is independent with $\{ \bm{d}_1^n,\ \bm{f}_k^n,\ k=1,...,K \}$. Hence
  \begin{align}\label{Kn}
    H(M|\bm{d}_1^n,\bm{f}_2^n,...,\bm{f}_K^n)=Kn.
  \end{align}
  By the chain rule,
  \begin{align*}
   H(M|\bm{c}_1^n,\bm{d}_1^n,\bm{f}_2^n,...,\bm{f}_K^n) + H(\bm{c}_1^n|\bm{d}_1^n,\bm{f}_2^n,...,\bm{f}_K^n) = H(M,\bm{c}_1^n|\bm{d}_1^n,\bm{f}_2^n,...,\bm{f}_K^n) \ge H(M|\bm{d}_1^n,\bm{f}_2^n,...,\bm{f}_K^n).
  \end{align*}
  So
  \begin{align}\label{first}
    H(M|\bm{c}_1^n,\bm{d}_1^n,\bm{f}_2^n,...,\bm{f}_K^n) \ge H(M|\bm{d}_1^n,\bm{f}_2^n,...,\bm{f}_K^n) - H(\bm{c}_1^n|\bm{d}_1^n,\bm{f}_2^n,...,\bm{f}_K^n) \ge (K-1)n,
  \end{align}
  where the last inequality holds because of (\ref{Kn}) and $H(\bm{c}_1^n|\bm{d}_1^n,\bm{f}_2^n,...,\bm{f}_K^n) \le H(\bm{c}_1^n) \le n$. Similarly,
  \begin{align}
    H(M|\bm{c}_1^n,\bm{d}_1^n,\bm{f}_2^n,...,\bm{f}_K^n) &\le H(M|\bm{c}_1^n,\bm{d}_1^n,\bm{f}_2^n,...,\bm{f}_K^n,\bm{e}_2^n,...,\bm{e}_K^n) + H(\bm{e}_2^n,...,\bm{e}_K^n|\bm{c}_1^n,\bm{d}_1^n,\bm{f}_2^n,...,\bm{f}_K^n)\nonumber\\
    & \le   n\epsilon_n + H(\bm{e}_2^n,...,\bm{e}_K^n|\bm{c}_1^n,\bm{d}_1^n,\bm{f}_2^n,...,\bm{f}_K^n)\label{fano}\\
    & \le  n\epsilon_n + (K-1)n \label{Mcond},
  \end{align}
  where $\epsilon_n \to 0$ as $n \to \infty$ and (\ref{fano}) is due to the cut set $\{\bm{c}_1^n, \bm{d_1}^n, \bm{f}_2^n,...,\bm{f}_K^n\}$ from $S$ to $D$ and Fano's inequality. Hence it follows
   \begin{align}
     H(\bm{c}_1^n) & \ge H(\bm{c}_1^n|\bm{d}_1^n,\bm{f}_2^n,...,\bm{f}_K^n) \nonumber \\
     & \ge H(M|\bm{d}_1^n,\bm{f}_2^n,...,\bm{f}_K^n) -  H(M|\bm{c}_1^n,\bm{d}_1^n,\bm{f}_2^n,...,\bm{f}_K^n)\label{c11}\\ &\ge n -n\epsilon_n\label{c12},
   \end{align}
   where (\ref{c11}) holds because of (\ref{first}), and (\ref{c12}) follows from (\ref{Kn}) and (\ref{Mcond}).
   Also notice that
  \begin{align}\label{Mcdf}
    H(M|\bm{c}_1^n,\bm{d}_1^n,\bm{f}_2^n,...,\bm{f}_K^n) \ge H(M|\bm{c}_1^n,\bm{f}_2^n,...,\bm{f}_K^n) - H(\bm{d}_1^n|\bm{c}_1^n,\bm{f}_2^n,...,\bm{f}_K^n),
  \end{align}
  where
  \begin{align}\label{Mc}
    H(M|\bm{c}_1^n,\bm{f}_2^n,...,\bm{f}_K^n) = H(M|\bm{c}_1^n) \ge Kn - n\delta_n,
  \end{align}
  with $\delta_n \to 0 \text{ as } n \to 0$. Here the first equality holds because $\{M, \bm{c}_1^n\}$ is independent with $\{\bm{f}_i^n,\ i=1,...,K\}$ and the second inequality holds due to the weak secrecy constraint. Note that all arguments extend naturally to the cases of strong and perfect secrecy because they are even stronger conditions. Therefore by (\ref{Mcond}), (\ref{Mcdf}) and (\ref{Mc}) we have
  \begin{align}
    H(\bm{d}_1^n) & \ge H(\bm{d}_1^n|\bm{c}_1^n,\bm{f}_2^n,...,\bm{f}_K^n)\nonumber\\
    & \ge H(M|\bm{c}_1^n,\bm{f}_2^n,...,\bm{f}_K^n) - H(M|\bm{c}_1^n,\bm{d}_1^n,\bm{f}_2^n,...,\bm{f}_K^n)\nonumber\\
    & \ge n-n\epsilon_n-n\delta_n.\label{hdn}
  \end{align}
   Furthermore, by the independency between the sets of $\{M, \bm{c}_1^n, \bm{d}_1^n\}$ and $\{\bm{f}_i^n,\ i=1,...,K\}$ we also have $$H(M|\bm{c}_1^n,\bm{d}_1^n,\bm{f}_2^n,...,\bm{f}_K^n)=H(M|\bm{c}_1^n,\bm{d}_1^n)$$. According to (\ref{first}) and (\ref{Mcond}), it is bounded by
  \begin{align}\label{conMbnd}
    (K-1)n \le H(M|\bm{c}_1^n,\bm{d}_1^n) \le  n\epsilon_n + (K-1)n.
  \end{align}

  Now consider the joint entropy of $M,\ \bm{d}_1^n,\ \bm{c}_1^n$ and expand it in two ways
  \begin{align*}
    H(M,\bm{d}_1^n,\bm{c}_1^n) & = H(\bm{c}_1^n|M,\bm{d}_1^n) + H(M|\bm{d}_1^n) + H(\bm{d}_1^n)\\
    & = H(M|\bm{c}_1^n,\bm{d}_1^n) + H(\bm{d}_1^n|\bm{c}_1^n)+H(\bm{c}_1^n) \le (K+1)n  + n\epsilon_n,
  \end{align*}
  where the last inequality holds because of (\ref{conMbnd}) and $H(\bm{d}_1^n|\bm{c}_1^n) \le n$, $H(\bm{c}_1^n) \le n$.
  Therefore
  \begin{align}\label{cMd}
    H(\bm{c}_1^n|M,\bm{d}_1^n) \le (K+1)n  + n\epsilon_n - H(M|\bm{d}_1^n) - H(\bm{d}_1^n) \le  2n\epsilon_n + n\delta_n,
  \end{align}
   where (\ref{hdn}) and $H(M|\bm{d}_1^n) = Kn$ (because $M$ and $\bm{d}_1^n$ are independent by construction) are used to establish the inequality. And so by observing the Markov chain $(M,\bm{d}_1^n) \to
  (M,\bm{b}_1^n) \to \bm{c}_1^n$, it follows
  \begin{align}\label{cMb}
      H(\bm{c}_1^n|M,\bm{b}_1^n) = H(\bm{c}_1^n|M,\bm{b}_1^n,\bm{d}_1^n) \le H(\bm{c}_1^n|M,\bm{d}_1^n) \le 2n\epsilon_n + n\delta_n.
  \end{align}

  Then consider the joint entropy of $M,\ \bm{b}_1^n,\ \bm{c}_1^n$ and expand it in two ways
  \begin{align*}
    H(M, \bm{b}_1^n, \bm{c}_1^n) &= H(\bm{b}_1^n|M,\bm{c}_1^n) + H(M|\bm{c}_1^n)+H(\bm{c}_1^n)\\
    &= H(\bm{c}^n_1|M, \bm{b}_1^n) + H(M|\bm{b}_1^n) + H(\bm{b}_1^n) \le (K+1)n + 2n\epsilon_n + n\delta_n,
  \end{align*}
  where the last inequality holds due to (\ref{cMb}) and $H(M|\bm{b}_1^n) = Kn$, $H(\bm{b}_1^n) \le n$.
  Therefore by (\ref{c12}) and the weak secrecy constraint $H(M|\bm{c}_1^n) \ge Kn - n\delta_n$, we have
  \begin{align}\label{bMc}
    H(\bm{b}_1^n|M,\bm{c}_1^n) \le (K+1)n  + 2n\epsilon_n + n\delta_n - H(M|\bm{c}_1^n) - H(\bm{c}_1^n) \le 3n\epsilon_n + 2n\delta_n.
  \end{align}
  So \begin{align*}
    H(\bm{b}_1^n|M,\bm{d}_1^n) &\le H(\bm{b}_1^n,\bm{c}_1^n|M,\bm{d}_1^n)\\
    &= H(\bm{b}_1^n|M,\bm{c}_1^n,\bm{d}_1^n) + H(\bm{c}_1^n|M, \bm{d}_1^n)\\
    & \le H(\bm{b}_1^n|M,\bm{c}_1^n) + H(\bm{c}_1^n|M, \bm{d}_1^n)\\
    &\le 3n\epsilon_n  + 2n\delta_n
    + 2n\epsilon_n + n\delta_n = 5n\epsilon_n + 3n\delta_n,
  \end{align*}
  where the last inequality invokes (\ref{bMc}) and (\ref{cMd}).
  Notice that $M$ is independent with $\{\bm{b}_1^n,\bm{d}_1^n\}$, so
  \begin{align}\label{bd}
    H(\bm{b}_1^n|\bm{d}_1^n)=H(\bm{b}_1^n|M,\bm{d}_1^n)\le 5n\epsilon_n + 3n\delta_n.
  \end{align}

  Now we bound the entropy of $\bm{b}^n_1$. Again consider the joint entropy,
  \begin{align*}
    H(M, \bm{b}_1^n, \bm{c}_1^n) &= H(\bm{c}^n_1|M, \bm{b}_1^n) + H(M|\bm{b}_1^n) + H(\bm{b}_1^n)\\
    &= H(\bm{b}_1^n|M,\bm{c}_1^n) + H(M|\bm{c}_1^n)+H(\bm{c}_1^n) \ge (K+1)n  - n\epsilon_n -n\delta_n,
  \end{align*}
  where the last inequality holds because of (\ref{c12}), the secrecy condition $H(M|\bm{c}_1^n) \ge Kn - n\delta_n$, and $H(\bm{b}_1^n|M,\bm{c}_1^n) \ge 0$.
  So by (\ref{cMb}) and because $H(M|\bm{b}_1^n) = Kn$, we have
  \begin{align}\label{Hb}
    H(\bm{b}_1^n) \ge (K+1)n  -n\epsilon_n  -n\delta_n - H(\bm{c}_1^n|M,\bm{b}_1^n) - H(M|\bm{b}_1^n) \ge n  -3n\epsilon_n -2n\delta_n.
  \end{align}

   Finally, by (\ref{bd}) and (\ref{Hb}),
  \begin{align}
     I(\bm{b}_1^n;\bm{d}_1^n)
     \ge H(\bm{b}_1^n) - H(\bm{b}_1^n|\bm{d}_1^n)
    & \ge n  -8n\epsilon_n -5n\delta_n \nonumber,
  \end{align}
  The  above argument extends to all other paths naturally (by renumbering the notations accordingly), so
    \begin{align}
     I(\bm{b}_i^n;\bm{d}_i^n) \ge n  -8n\epsilon_n -5n\delta_n \nonumber,\ \ \forall  i = 1,..., K.
  \end{align}
  Therefore $\forall  i = 1,..., K$, by the channel coding theorem, if we employ an outer code of length $n$ by encoding $\bm{b}_i^n$ as a supersymbol, then there exists an inner code that achieves a rate of $n - 8n\epsilon_n -8n\delta_n$ from $B_i$ to $T_i$, and so the overall rate is
  $$R_i  \ge \frac{n -8n\epsilon_n -5n\delta_n}{n} \to 1\ \ \text{as }n\ \to \infty.$$
  Because $B_i$ can be viewed as a virtual source of $S_i$, so $\forall  i = 1,..., K$, the unicast from node $S_i$ to $T_i$ of rate 1 is feasible.

  ``$\Leftarrow$''. The secrecy capacity is upper bounded by $K$ due to the min cut from $S$ to $D$. And secrecy rate $K$ is achieved by the scheme described in Figure \ref{schemefig}, i.e., let $\epsilon_n^{(i)}$ be the probability of error for the unicast from $S_i$ to $T_i$, then the probability of error from $S$ to $D$ is upper bounded by $K\epsilon^*_n \to 0$ as $n \to \infty$, where $\epsilon^*_n = \max_{i} \epsilon^{(i)}_n$. Note that the scheme  achieves perfect secrecy, which in turn implies strong and weak secrecy requirements are also satisfied.
\end{proof}
\begin{figure}[h!]
  \begin{center}
      \includegraphics[width=0.7\textwidth]{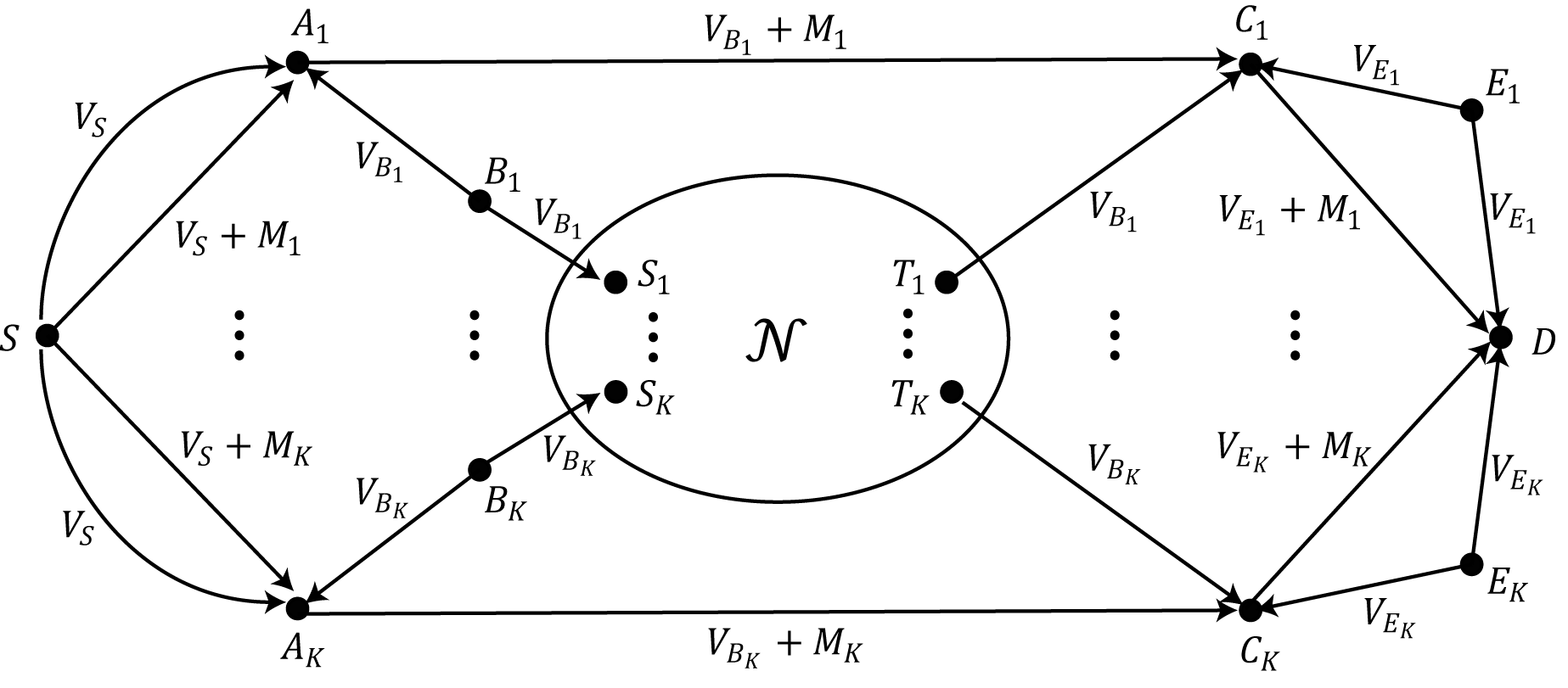}
  \caption{A scheme to achieve secrecy rate $K$. $V_x$ is the local key injected by node $x$, with $H(V_x)=1,\ \forall x$. $M_i,\ i=1,...,K$ are source input messages, with $H(M_i)=1,\ i=1,...,K$.}\label{schemefig}
           \end{center}
\end{figure}

The above result can be easily extended to the case of zero error communication and perfect secrecy.  In this case, we
say a rate $R$ is feasible if there exists a code with finite length
$n$ such that $|\mathcal{S}_n|=2^{nR}$ and the probability of decoding error is strictly zero.
Then for the secrecy communication problem in Figure \ref{secmul}, its zero error perfect secrecy capacity is $K$ if and only
if the $K$-unicast for source-destination pairs $\{(S_i,T_i),\ i=1,...,K\}$ of unit rate is feasible with zero error. The proof of this claim follows the same outline as the proof of Theorem \ref{th1}, with the difference that all $\epsilon_n$ and $\delta_n$ become strictly 0. Then (\ref{bd}) implies that $\bm{b}_1^n$ is a function of $\bm{d}_1^n$, and hence that it can be perfectly decoded from  $\bm{d}_1^n$.

Conversely, we note that for any given weakly secure communication problem where any one link can be wiretapped, we can  construct a corresponding  communication problem without security constraints (which can in turn be reduced to an equivalent multiple-unicast problem by~\cite{dougherty06nonreversibility}) that is feasible if and only if the secure communication problem is feasible. The equivalent communication problem is defined on a specialized version of the $A$-enhanced network in~\cite{dikaliotis12}, stated here in simplified form for convenience as follows.

\begin{dfn}
Consider a secure communication problem on a network $\mathcal{N}$ represented by a directed graph $\mathcal{G}=(\mathcal{V},\mathcal{E})$ with the collection of wiretap sets $A=\{\{e\}:\;e\in\mathcal{E}\}$ comprising individual links. Let $c_e$ denote the capacity of link $e\in\mathcal{E}$ and let $\mathcal{E}_{\text{out}}(i)$ denote the set of links $(i,j)$ originating at node $i\in\mathcal{V}$. The $A$-enhanced network $\mathcal{N}(A)$ on graph $\check{\mathcal{G}}=(\check{\mathcal{V}},\check{\mathcal{E}})$ is defined as follows:
\begin{enumerate}
\item For each link $e=(i,j)\in\mathcal{E}$, create an eavesdropper node $v_e$ and a node $u_{e}$, replace $(i,j)$ by two links $(i,u_{e})$ and $(u_e,j)$  and create a link $(u_e,v_e)$, all of capacity $c_e$.
    \item For each node $i\in\mathcal{V}$ create a message node $v_i$ and a random key node $\bar{v}_i$.\item Create an overall key node $v_L$.
    \item For each $i\in\mathcal{V}$, create a set $H_i$  of links  from node $v_i$ to all of the nodes in $\big\{i\big\}\cup\big\{v_e:e\in\mathcal{E}\big\}$, and a set $\bar{H}_i$  of links   from  node $\bar{v}_i$ to nodes $i$ and $v_L$, all of which have capacity
\begin{align*} \check{c}_i=\sum_{e\in\mathcal{E}_{\text{out}}(i)}c_e.
\end{align*}\item
 For each link $e\in\mathcal{E}$, create a link $(v_L,v_e)$ of capacity
\begin{align*}
\sum_{e'\in\mathcal{E}}c_{e'}-c_e.
\end{align*}
\item $\check{\mathcal{V}}=\mathcal{V}\cup\big\{v_i:i\in\mathcal{V}\big\}\cup\big\{\bar{v}_i:i\in\mathcal{V}\big\}\cup\big\{u_e,v_e:e\in\mathcal{E}
        \big\}\cup\{v_L\}$.
\item $\check{\mathcal{E}}=\bigcup_{i\in\mathcal{V}}(H_i\cup\bar{H}_i)\cup\big\{(i,u_{e})$, $(u_e,j)$, $(u_e,v_e):e=(i,j)\in\mathcal{E}\big\}\cup\big\{(v_L,v_e):e\in\mathcal{E}\big\}$.
\end{enumerate}
\label{dfn:noiseless_network_lower_bound}
\end{dfn}
The communication requirements in the A-enhanced network are as follows.
For each message that originates at a node $i$ in the original secure
communication problem, in the A-enhanced network, the message originates
instead at the corresponding message node $v_i$ and is demanded by the same
destination nodes as in the original problem. In addition,  the communication problem on the
$A$-enhanced network also requires a random key message $L_i\in\mathcal{L}_i=\{1,\ldots,2^{n\check{c}_i}\}$ to be delivered from each random key node $\bar{v}_i$ to all eavesdropper nodes $\{v_e:e\in\mathcal{E}\}$. Intuitively, if the communication problem on the
$A$-enhanced network is solved, then it implies that the information observed by the eavesdropper is independent with the input message in the secure communication problem, and hence the secrecy conditions are satisfied. Details are given in \cite{dikaliotis12}.

\ifCLASSOPTIONcaptionsoff
  \newpage
\fi



%
\bibliographystyle{IEEEtran}	
\bibliography{sec}

%




\end{document}